\title{Improving the Performance of Nested Lattice Codes Using Concatenation}
\author{Shashank Vatedka,~\IEEEmembership{Student~Member,~IEEE,}
Navin~Kashyap,~\IEEEmembership{Senior~Member,~IEEE}
\thanks{This work was presented in part at the 2016 National Conference on Communications, Guwahati, Assam, India.}
\thanks{S.~Vatedka and N.~Kashyap are with the Department of Electrical Communication Engineering, Indian Institute of Science, Bangalore, India. Email: \texttt{\{shashank,nkashyap\}@ece.iisc.ernet.in} }
}
\begin{document}
\maketitle
\begin{abstract}
A fundamental problem in coding theory is the design of an efficient coding scheme that achieves the capacity of the additive white Gaussian (AWGN) channel.
%There has been considerable progress in this regard, and we now have capacity-achieving schemes with low decoding  complexity.
The main objective of this short note is to point out that by concatenating a capacity-achieving nested lattice code with a suitable high-rate linear code over an appropriate finite field,
we can achieve the capacity of the AWGN channel with polynomial encoding and decoding complexity. Specifically, we show that using  inner Construction-A lattice codes and outer Reed-Solomon codes, we can obtain capacity-achieving codes whose encoding and decoding complexities grow as $O(N^2)$, while the probability of error decays exponentially in $N$, where $N$ denotes
the blocklength. Replacing the outer Reed-Solomon code by an expander code helps us further reduce the decoding complexity to $O(N\log^2N)$. This also gives us a recipe for converting a high-complexity nested lattice code for a Gaussian channel to a low-complexity concatenated code without any loss in the asymptotic rate. As examples, we describe polynomial-time coding schemes for the wiretap channel, and the compute-and-forward scheme for computing integer linear combinations of messages.
\end{abstract}

\section{Introduction}\label{sec:intro}

The problem of designing efficient coding schemes for the additive white Gaussian (AWGN) channel has been studied for a very long time.
Shannon~\cite{Shannon} showed that random codes can achieve the capacity of the AWGN channel. Showing that structured codes can achieve capacity remained open till it was shown in~\cite{deBuda,LinderCorrect}, and later in~\cite{Urbankelattice},
that lattice codes can achieve capacity with maximum likelihood (ML) decoding. Erez and Zamir~\cite{Erez04} then showed that nested lattice codes
can achieve capacity with closest lattice point decoding. 
Lattice codes have been shown to be optimal for several other problems such as dirty paper coding, Gaussian multiple access channels, quantization, and so on.
 They have also been used in the context of physical layer network coding~\cite{LiewPNC,Nazer11} and physical layer security~\cite{LingWiretap,Vatedka15}.
 We refer the reader to the book by Zamir~\cite{zamirbook} for an overview of the applications of lattices for channel coding and quantization.
A drawback with the proposed nested lattice schemes is that there are no known polynomial-time algorithms for encoding and decoding.
% Unlike the case of linear codes for discrete memoryless channels, the encoding complexity of lattice codes also grows exponentially with $N$. 
A notable exception is the polar lattice scheme~\cite{YanPolar} which can achieve the capacity of the AWGN channel with an encoding/decoding complexity of $O(N\log^2N)$.\footnote{Yan et al.~\cite{YanPolar} also show that for a fixed error probability (as opposed to a probability of error that goes to zero as $N\to\infty$), the encoding/decoding complexity of polar lattices is $O(N\log N)$.} However, the probability of error goes to zero as $e^{-\Omega(N^{\beta})}$ for any $0<\beta<1/2$. 

There are also lattice constructions with low decoding complexity~\cite{Sadeghi_ldpclattice,Sakzad_turbolattice,Sommer} that have been empirically shown to achieve rates close to capacity. However, it is still an open problem to theoretically show that these codes achieve the capacity of the AWGN channel. Low density Construction-A (LDA) lattices~\cite{diPietro_arxiv} are a class of lattices obtained from low-density parity-check codes and have been shown to achieve capacity with lattice decoding. Simulation results suggest that they can approach capacity with low-complexity belief propagation decoding, but we still do not have a theoretical proof of the same.

Concatenated codes were introduced by Forney~\cite{Forney_concat} as a technique for obtaining low-complexity codes that can achieve 
the capacity of discrete memoryless channels. Concatenating an inner random linear code with an outer Reed-Solomon code is a simple way of designing good codes.
Using this idea, Joseph and Barron~\cite{JosephSuperposition} proposed the capacity-achieving sparse regression codes for the AWGN channel, having quadratic (in the blocklength $N$) encoding/decoding complexity. They used a concatenated coding scheme with an inner sparse superposition code and an outer Reed-Solomon code. The probability of decoding error goes to zero exponentially in $N/\log N$~\cite{JosephExponential}. Recently,~\cite{Rush_SPARC_AMP} proposed an approximate message passing scheme  having complexity that grows roughly as $O(N^2)$ for decoding sparse regression codes, and showed that the new decoder guarantees a vanishingly small error probability as $N\to\infty$ for all rates less than the capacity. However,~\cite{Rush_SPARC_AMP} does not provide any guarantees for the rate of decay of the probability of error.

The objective of this article is to show that using the technique of concatenation, we can reduce the asymptotic decoding complexity of nested lattice codes that operate at rates close to capacity. 
We start with a sequence of nested lattice codes having rate $C-\delta$, where $C$ denotes the capacity of the AWGN channel, and $\delta$ is a small positive constant that denotes the gap to capacity. These codes typically have exponential encoding/decoding complexity. By concatenating these with suitable linear codes, we obtain a sequence of concatenated codes that have transmission rate at least $C-2\delta$, but whose encoding/decoding complexity scales polynomially in the blocklength. 
We show that concatenating an inner nested lattice code with an outer Reed-Solomon code yields a capacity-achieving coding scheme whose encoding/decoding complexity is quadratic in the blocklength. Furthermore, the probability of error decays exponentially in $N$. Replacing the Reed-Solomon code with an expander code~\cite{Zemorconcatenated} yields a capacity-achieving coding scheme with decoding complexity $O(N\log^2N)$ and encoding complexity $O(N^2)$. To the best of our knowledge, this is the first capacity-achieving coding scheme for the AWGN channel whose encoding and decoding complexities are polynomial, and the probability of error decays exponentially in the blocklength.
The techniques that we use are not new, and we use results of~\cite{Forney_concat} and~\cite{Erez04} to prove our results.
An attractive feature of this technique is that it can also be used to reduce the complexity of nested lattice codes for several other Gaussian networks.
It can be used as a tool to convert any nested lattice code having exponential decoding complexity to a code having polynomial decoding complexity.
This comes at the expense of a minor reduction in performance (in terms of error probability) of the resulting code. 
Furthermore, we are able to give guarantees only for large blocklengths.
% However this technique can be extended to obtain low-complexity schemes for other Gaussian channels.
As applications, we show how these ideas can be used to obtain a capacity-achieving scheme for the Gaussian wiretap channel and to reduce the decoding complexity of the compute-and-forward protocol for Gaussian networks.
More recently, these techniques have also been used to obtain polynomial-time lattice coding schemes for secret key generation from correlated sources~\cite{Vatedka_skgeneration}.

Throughout this article, we measure complexity in terms of the number of binary operations required for decoding/encoding, and
we are interested in how this complexity scales with the blocklength. 
We assume that arithmetic operations on real numbers are performed using floating-point arithmetic, and that each real number has a  $t$-bit binary representation, with $t$ being independent of the blocklength. The value of $t$ would depend on the computer architecture used for computations (typically 32 or 64 bits). In essence, we assume that each floating-point operation has complexity $O(1)$.

The rest of the paper is organized as follows. We describe the notation used in the paper and recall some concepts related to lattices in Section~\ref{sec:notation}. We then describe the concatenated coding scheme for the AWGN channel in Section~\ref{sec:codingscheme}, with Theorem~\ref{thm:main_RS}
summarizing the main result. 
In Section~\ref{sec:parallelconcatenation}, we use an outer expander code to reduce the decoding complexity to $O(N\log^2N)$. 
This is summarized by Theorem~\ref{thm:main_expander}.
The performance of the two concatenated coding schemes are compared with polar lattices and sparse superposition codes in Table~\ref{table:comparison}.
We make some remarks on extending these ideas to the Gaussian wiretap channel and the compute-and-forward protocol in Section~\ref{sec:discussion}. We also indicate how the same technique can be used to reduce decoding complexity and improve the probability of error of LDA lattices and polar lattices. We conclude the paper with some final remarks in Section~\ref{sec:concluding}. The proof of Lemma~\ref{lemma:CF_lowcomplexity} is provided in Appendix A.

\section{Notation and Definitions}\label{sec:notation}
% We borrow notation from \cite{Vatedka15}. 
For a detailed exposition on lattices and their applications in several communication-theoretic problems, see~\cite{zamirbook}. We denote the set of integers by $\Z$ and real numbers by $\R$. The set of nonnegative real numbers is denoted by $\R^+$.
For a prime number $p$ and positive integer $k$, we let $\Fpk$ denote the field of characteristic $p$ containing $p^k$ elements. 
% If $X$ and $Y$
% are random variables, then $I(X;Y)$ denotes the mutual information between $X$ and $Y$. 
For $A,B\subset \R^n$ and $a,b\in \R$, we define $aA+bB$ to be the set $\{ ax+by:x\in A,y\in B \}$.
Given $\x,\y\in \Z^n$ and $p\in \Z$, we say that $\x\equiv \y (\bmod p)$ if $(\x-\y)\in p\Z^n$. We use the standard big-O and little-O notation to express the asymptotic relationships between various quantities.

If $G$ is an $n\times n$ full-rank matrix with real entries, then the set $\L = G^T\Z^n\triangleq \{ G^T\z : \z\in\Z^n \}$ is called a \emph{lattice} in $\R^n$.
We say that $G$ is a \emph{generator matrix} for $\L$. 
% It is a fact that the generator matrix of a lattice is not unique.
Let $Q_{\L}(\cdot)$ denote the lattice \emph{quantizer} that maps a point in $\R^n$ to the point in $\L$ closest to it. For $\x\in \R^n$, we define $[\x]\bmod \L$ to be the quantization error $\x-Q_{\L}(\x)$ when using the quantizer $Q_{\L}$. The \emph{fundamental Voronoi region} of $\L$, $\cV(\L)$, is defined to be $\cV(\L)\triangleq \{\x\in \R^n:Q_{\L}(\x)=\0\}$. The radius of the smallest closed ball in $\R^n$ centered at zero which contains $\cV(\L)$ is called the \emph{covering radius}, and is denoted $\rcov(\L)$.
 Given two lattices $\L$ and $\Lc$ in $\R^n$, we say that $\Lc$ is \emph{nested} in $\L$ if $\Lc\subset \L$. We call $\Lc$ the \emph{coarse lattice} and $\L$ the \emph{fine lattice}.

\subsection{Construction A}\label{sec:constructionA}
For completeness, we describe Construction A, a technique to obtain lattices from linear codes over prime fields --- see~\cite{Erez04,zamirbook} for a more detailed description.
Let $p$ be a prime number and $\cC$ be an $(n,k)$ linear code over $\bFp$, i.e., $\cC$ has length $n$ and dimension $k$. Then, the Construction-A
lattice obtained from $\cC$, denoted $\L_\mathrm{A}(\cC)$, is defined as the set of all points $\x$ in $\Z^n$ such that $\x\equiv \y\: (\bmod p)$ for some $\y\in\cC$.
Note that $p\Z^n$ is always a sublattice of $\L_\mathrm{A}(\cC)$.

We will use the nested lattice construction from~\cite{Erez04}. Let $\Lc$ be a (possibly scaled) Construction-A lattice in $\R^n$, having a generator matrix $G$. Let $\L_\mathrm{A}(\cC)$ be another Construction-A lattice obtained from an $(n,k)$ linear code $\cC$ over $\bFp$. Then, $\L\triangleq \frac{1}{p}G^T\L_\mathrm{A}(\cC)=\{ (1/p)G^T\x : \x \in \L_\mathrm{A}(\cC) \}$ is also a lattice, and it can be verified that $\Lc$ is nested in $\L$. We will refer to $(\L,\Lc)$ as a \emph{nested Construction-A} lattice pair. A key feature of this nested lattice pair that will be of use to us is that $\L\cap \cV(\Lc)$ (and hence the quotient group $\L/\Lc$) contains $p^k$ elements. Furthermore, there exists a group isomorphism between the quotient group $\L/\Lc$ and $\Fpk$, the latter being viewed as an additive group.

\section{Coding Scheme for the AWGN Channel}\label{sec:codingscheme}
%\begin{figure}
%\begin{center}
%\resizebox{7cm}{!}{\input{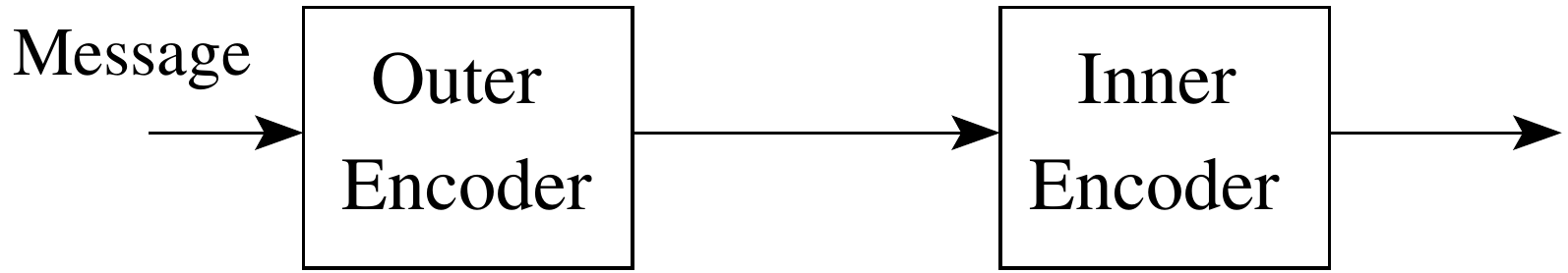_t}}
%\end{center}
%\end{figure}
Let us consider the point-to-point AWGN channel where the source encodes its message $M$ to $\u\in\R^n$ and transmits this to a destination that receives 
\[
	\w = \u + \z,
\]
where $\z$ is the noise vector having independent and identically distributed (iid) Gaussian entries with mean zero and variance $\nsvar$. Erez and Zamir~\cite{Erez04} proposed a capacity-achieving nested lattice
scheme for the AWGN channel, which we briefly describe here. The code is constructed using a pair of nested lattices $(\Lfn,\Lcn)$, where $\Lcn\subset\Lfn\subset\R^n$.
The codebook consists of all the points of $\Lfn$ within the fundamental Voronoi region of $\Lcn$, i.e., the codebook is $\Lfn\cap\cV(\Lcn)$.
The transmission rate is therefore $\frac{1}{n}\log_2|\Lfn\cap\cV(\Lcn)|$.

The source also generates a random dither vector $\t$, uniformly distributed over $\cV(\Lcn)$, which is assumed to be known to the decoder\footnote{In principle, the dither vector is not necessary (see, e.g.,~\cite[Section IV]{diPietro_arxiv}). However, this technique of dithered transmission simplifies the analysis of the probability of error of the decoder.}. 
Each message $M$ is mapped to a point $\x$ in the codebook $\Lfn\cap\cV(\Lcn)$. The encoder $\cEn$ takes the message $M$ as input, and outputs the 
vector $[\x-\t]\bmod\Lcn$, which is transmitted across the channel. This process of translating the message by $\t$ modulo $\Lcn$ prior to transmission is called \emph{dithering}. The encoder satisfies a maximum transmit power constraint given by $\frac{1}{n}\max_{\u\in\cV(\Lc) } \Vert \u \Vert^2=\frac{1}{n}\rcov^2(\Lcn)<P$. 

Upon receiving $\w$, the receiver uses a decoder $\cDn$ to estimate $M$, which does the following. It computes $\widetilde{\w}=[\alpha \w+\t]\bmod\Lcn$, where
$\alpha = \frac{P}{P+\nsvar}$. The estimate of $M$ is the message that corresponds to $[Q_{\Lfn}(\widetilde{\w})]\bmod\Lcn$.

Let $C\triangleq  \frac{1}{2}\log_2\left( 1+\frac{P}{\nsvar}  \right)$.
Erez and Zamir~\cite{Erez04} showed that there exist nested lattices with which we can approach the capacity of the AWGN channel. Specifically,
\begin{lemma}[\cite{Erez04}, Theorem 5]
	For every $\epsilon>0$, there exists a sequence of nested Construction-A lattice pairs $(\Lfn,\Lcn)$ 
	%with $\Lcn\subset\Lfn$ and associated encoder-decoder pairs $(\cEn,\cDn)$ 
	such that for all sufficiently large $n$, the maximum transmit power is
	\[
		\frac{1}{n}\rcov^2(\Lcn) \leq  P+\epsilon,
	\] 
	the transmission rate is
% 	\footnote{\textup{The subscript `in' in $\Rin^{(n)}$ indicates that we intend to use the nested lattice code as an inner code in a concatenated coding scheme, which we describe in Section~\ref{sec:concat_scheme_awgn}.}}
	\[
		R^{(n)} \triangleq  \frac{1}{n}\log_2|\Lfn\cap\cV(\Lcn)| \geq C-\epsilon,
	\]
	and the probability of error decays exponentially in $n$ for all $R^{(n)}<C$, i.e., there exists a function $E:\R^+ \to \R^+$ so that for every $\x\in \Lfn\cap\cV(\Lcn)$ and all sufficiently large $n$, we have
	\[
		\Pr[ \x \neq \cDn(\cEn(\x)+\z) ] \leq e^{-n E(R^{(n)})}.
	\]
	Furthermore, the quantity $E(R^{(n)})$ is positive for all $R^{(n)}<C$.
	\label{lemma:ErezResult}
\end{lemma}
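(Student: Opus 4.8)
\emph{Proof proposal.} Lemma~\ref{lemma:ErezResult} is Theorem~5 of~\cite{Erez04}; the plan is to reconstruct that argument. I would exhibit the nested pair $(\Lfn,\Lcn)$ through a random ensemble of nested Construction-A lattices, argue that a typical member has a coarse lattice that is good for covering and a fine lattice that is good for channel coding, and then analyze the dithered, MMSE-scaled decoder on the mod-$\Lcn$ channel it induces.

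First I would pin down the induced channel. Using the distributivity of $[\cdot]\bmod\Lcn$ over $\Lcn$ together with the Crypto Lemma (so that $\u=[\x-\t]\bmod\Lcn$ is uniform on $\cV(\Lcn)$ and independent of $\x$ and of $\z$), a short calculation gives
\[
  \widetilde{\w}=[\alpha\w+\t]\bmod\Lcn=[\x+\mathbf{n}_{\mathrm{eff}}]\bmod\Lcn,\qquad \mathbf{n}_{\mathrm{eff}}=\alpha\z-(1-\alpha)\u .
\]
The decoder is then correct exactly when $[Q_{\Lfn}(\x+\mathbf{n}_{\mathrm{eff}})]\bmod\Lcn=\x$, and since the law of $\mathbf{n}_{\mathrm{eff}}$ does not depend on $\x$, the error probability is the same for every codeword of a given code; hence the maximal and average error probabilities coincide, and it suffices to control the latter --- which I would do by averaging also over the ensemble. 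Writing $\sigma^2(\Lcn)$ for the per-dimension second moment of $\Lcn$, the per-dimension variance of $\mathbf{n}_{\mathrm{eff}}$ is $\alpha^2\nsvar+(1-\alpha)^2\sigma^2(\Lcn)$, which tends to $\frac{P\nsvar}{P+\nsvar}$ once $\sigma^2(\Lcn)\to P$ and $\alpha=\frac{P}{P+\nsvar}$, so that the induced mod-$\Lcn$ channel has capacity $C$.

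Next I would bring in the goodness properties. Scaling the coarse Construction-A lattice so that $\sigma^2(\Lcn)\to P$, I would invoke Rogers-goodness (covering-goodness) of random Construction-A lattices to force also $\frac1n\rcov^2(\Lcn)\leq P+\epsilon$, giving the maximum-power constraint. Covering-goodness is also what makes $\mathbf{n}_{\mathrm{eff}}$ ``almost Gaussian'': although $\u$ is uniform on $\cV(\Lcn)$ rather than Gaussian, its density is dominated, up to a sub-exponential factor, by that of an iid Gaussian vector of the same variance, so the error exponent is penalized only by a vanishing amount relative to the Gaussian case. For the fine lattice I would use that a random Construction-A lattice is Poltyrev-good; with parameters chosen so that $R^{(n)}=\frac1n\log_2|\Lfn\cap\cV(\Lcn)|\geq C-\epsilon$, the relevant volume-to-noise ratio stays bounded away from $1$ whenever $R^{(n)}<C$, and a nested Minkowski--Hlawka-type averaging argument then yields average (hence maximal) error probability at most $e^{-nE(R^{(n)})}$, with $E(\cdot)$ the modified Poltyrev exponent, which is strictly positive for all $R^{(n)}<C$. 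Intersecting the good events over the ensemble, extracting a sequence, and relabeling $\epsilon$ completes the proof.

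The step I expect to be the main obstacle is the joint construction: I need a single nested Construction-A pair whose coarse lattice is good for covering (to pin down the power and to Gaussianize the effective noise) while its fine lattice is simultaneously good for channel coding (to secure the Poltyrev exponent), and I must verify that the nesting preserves both properties. In~\cite{Erez04} this is handled through a careful choice of the underlying linear codes and of the growth rate of the defining prime $p$; more streamlined treatments exist in later work, and any of these would suffice here.
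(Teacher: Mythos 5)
The paper does not prove this lemma---it is imported verbatim as Theorem~5 of~\cite{Erez04}---so there is no in-paper argument to compare against; your sketch is a faithful reconstruction of the Erez--Zamir proof itself (Crypto Lemma plus MMSE scaling to reduce to a mod-$\Lcn$ channel with effective noise $\alpha\z-(1-\alpha)\u$, a covering-good coarse lattice for the power bound and Gaussian domination, and a Poltyrev-good fine lattice for the exponent). The one genuinely delicate step, the simultaneous goodness of a single nested Construction-A pair, is exactly where~\cite{Erez04} spends its effort, and you have correctly flagged it rather than glossed over it.
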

The decoding involves solving two closest lattice point problems, which are the $Q_{\Lfn}$ and $\bmod \Lcn$ operations.
Therefore, the decoding complexity is $O(2^{nR^{(n)}})$. If the encoder uses a look-up table to map messages to codewords, the complexity would also be $O(2^{R^{(n)}})$.

\subsection{The Concatenated Coding Scheme for the AWGN Channel}\label{sec:concat_scheme_awgn}

\begin{figure}
\begin{center}
\resizebox{8cm}{!}{\input{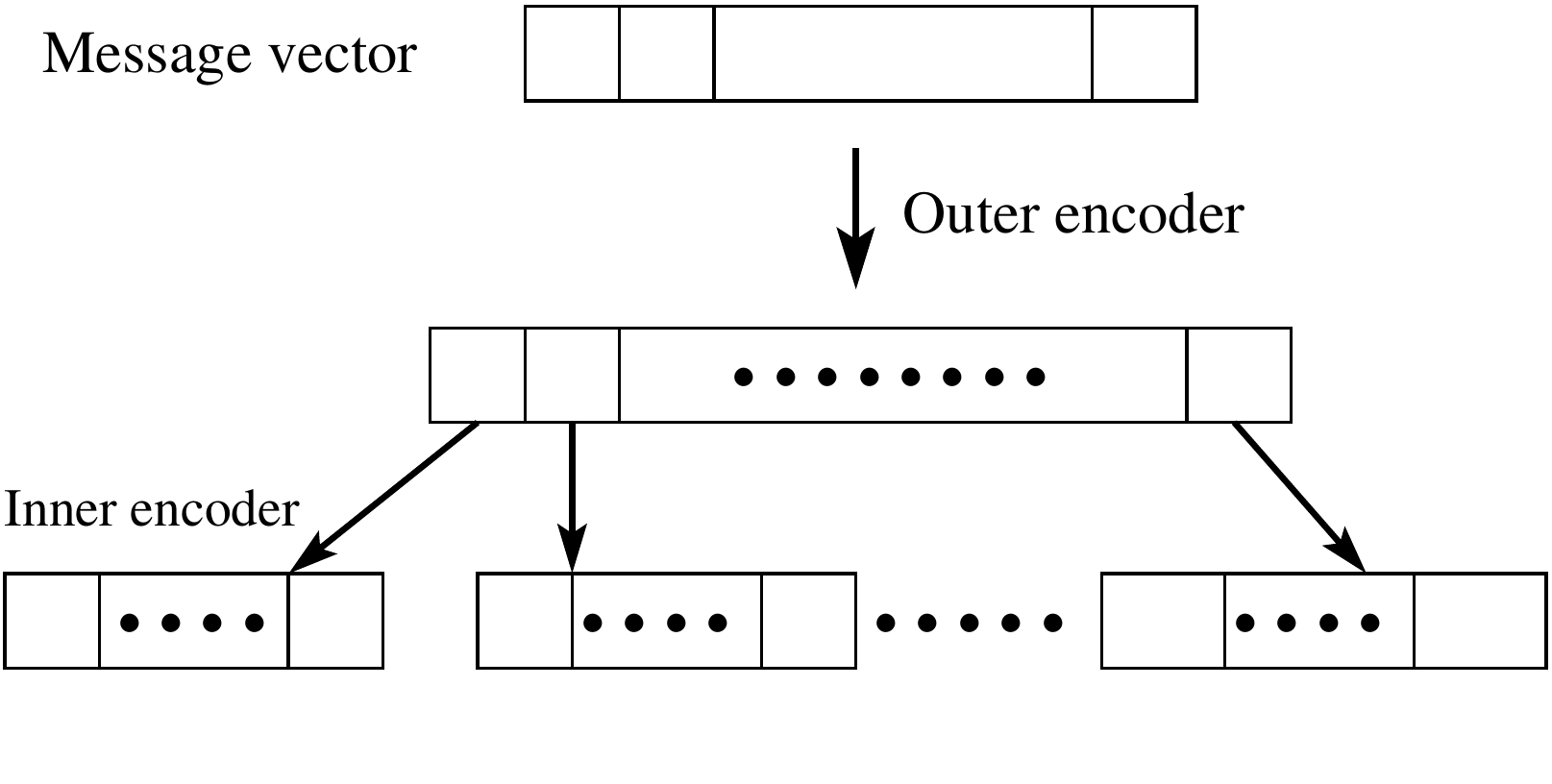_t}}
\end{center}
\caption{Illustration of the concatenated nested lattice coding scheme.}
\label{fig:concatenatedcode}
\end{figure}
Let us now give a brief description of the concatenated coding scheme. See~\cite{Forney_concat} for a more detailed exposition and application to the discrete memoryless channel. The code has two components: 
\begin{itemize}
\item Inner code: A nested Construction-A lattice code $(\Lfn,\Lcn)$ with the fine lattice $\Lfn$ obtained from a $(n,k)$ linear code over $\bFp$. 
\item Outer code: An $(\Nout,\Kout,\dout)$ linear block code (where $\dout$ is the minimum distance of the code) over $\Fpk$.
\end{itemize}

The message set has size $ p^{k\Kout}$, and each message can be represented by a vector in $\Fpk^{\Kout}$. The outer code maps this vector to a codeword in $\Fpk^{\Nout}$ in a bijective manner. Let us call this $\mathbf{c}_{\mathrm{out}}=[c_1\:c_2\:\cdots c_{\Nout}]^T$, where each $c_i\in\Fpk$. The inner code maps each $c_i\in\Fpk$ to a point in $\Lfn\cap\cV(\Lcn)$. This results in a codeword of length $n\Nout$ having real-valued components. Each inner codeword is dithered by an independent dither vector prior to transmission. The encoding process is illustrated in Fig.~\ref{fig:concatenatedcode}.
The receiver first uses the decoder for the inner code to estimate the components $c_i$, and finally uses the decoder for the outer code to recover the message. 
Since the outer code has minimum distance $\dout$, the message is guaranteed to be recovered correctly if not more than $(\dout-1)/2$ inner codewords are in error. 
Furthermore, if all the inner codewords satisfy the (max) power constraint, then the concatenated code is also guaranteed to satisfy the same.

We now show that using this technique, we can achieve the capacity of the AWGN channel.
Let us fix $\delta>0$.
Suppose we choose a sequence of nested lattice codes $(\Lfn,\Lcn)$ that are guaranteed by Lemma~\ref{lemma:ErezResult}.
Let $\Rin=\frac{1}{n}\log_2|\Lfn\cap\cV(\Lcn)|$ denote the  rate of the nested lattice code\footnote{While $\Rin$ depends on $n$, we have chosen not to include this dependence in the notation, to avoid clutter.}.
Recall that the number of cosets, $|\Lfn\cap\cV(\Lcn)|$, is equal to $p^k=2^{n\Rin}$. 
For every $n$, let us concatenate the nested lattice code with an outer $(\Nout,\Kout,\Nout-\Kout+1)$ Reed-Solomon code over $\Fpk$, where $\Nout = p^k-1$ and 
\begin{equation}
\Kout = \Nout (1-2e^{-nE(\Rin)}-2\delta).
\label{eq:kout}
\end{equation} 
% Here and in the rest of this section, we drop the superscript in $\Rin^{(n)}$ for convenience.
The resulting code, which we denote $\cCn$, has blocklength $N=n\Nout\approx n2^{n\Rin}$, and rate 
\begin{equation}
R^{(N)}=\frac{\Kout}{n\Nout}\log_2 p^k=(1-2e^{-nE(\Rin)}-2\delta)\Rin.
\label{eq:effectiverate}
\end{equation}

\begin{theorem}
For every $\epsilon>0$, there exists a  sequence of concatenated codes $\cCn$ with inner nested lattice codes and outer Reed-Solomon codes that satisfies the following for all sufficiently large $n$:
\begin{itemize}
\item rate $R^{(N)}\geq C-\epsilon$, 
\item maximum transmit power $$\max_{\x\in \cCn}\frac{1}{N}\Vert \x \Vert^2\leq P-\epsilon,$$
\item the probability of error is at most $e^{-NE(\Rin)\epsilon}$, and
\item the  encoding and decoding complexities grow as $O(N^2)$.
\end{itemize}
\label{thm:main_RS}
\end{theorem}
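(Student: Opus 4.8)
The plan is to analyze the concatenated code $\cCn$ exactly as built in Section~\ref{sec:concat_scheme_awgn}, reading the rate and power off Lemma~\ref{lemma:ErezResult} and reducing everything else to two estimates: a large-deviation bound for the number of mis-decoded inner symbols, and a bit-operation count. Given $\epsilon>0$, first fix the free constants. Choose a small $\epsilon_0>0$ and invoke Lemma~\ref{lemma:ErezResult} at a power level slightly below $P$ so that the inner code has rate $\Rin\ge C-\epsilon_0$, with $\Rin$ bounded away from $C$ (hence $E(\Rin)\ge E_0$ for a fixed $E_0>0$, so $e^{-nE(\Rin)}\to0$), and so that every dithered inner codeword, lying in $\cV(\Lcn)$, has per-symbol energy at most $\frac{1}{n}\rcov^2(\Lcn)\le P-\epsilon$; continuity of $C=\frac{1}{2}\log_2(1+P/\nsvar)$ in $P$ makes these compatible for $\epsilon_0$ small. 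By \eqref{eq:effectiverate} the rate of $\cCn$ is $R^{(N)}=(1-2e^{-nE(\Rin)}-2\delta)\Rin\to(1-2\delta)\Rin$, so choosing $\delta$ and $\epsilon_0$ small enough gives $R^{(N)}\ge C-\epsilon$ for all large $n$. The power claim for $\cCn$ then follows because a codeword of $\cCn$ is the concatenation of $\Nout$ dithered inner codewords, each obeying the same per-symbol bound (as noted in Section~\ref{sec:concat_scheme_awgn}).

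The main step, and the one I expect to require the most care, is the error probability of $\cCn$. Since the AWGN noise is i.i.d.\ and the $\Nout$ dither vectors are independent, the $\Nout$ inner blocks are transmitted over mutually independent copies of the dithered lattice channel; writing $\xi_i$ for the indicator that the inner decoder errs on the $i$-th symbol $c_i$, the $\xi_i$ are therefore independent, and by the per-codeword bound of Lemma~\ref{lemma:ErezResult}, $\Pr[\xi_i=1]\le q:=e^{-nE(\Rin)}$ irrespective of $c_i$. Hence $\sum_i\xi_i$ is stochastically dominated by $\mathrm{Bin}(\Nout,q)$, and since the outer Reed--Solomon code corrects any $(\dout-1)/2=\Nout(q+\delta)$ symbol errors, $\cCn$ fails only if $\mathrm{Bin}(\Nout,q)>\Nout(q+\delta)$. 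A Chernoff bound gives $\Pr[\mathrm{err}]\le\exp(-\Nout\,d(q+\delta\,\Vert\,q))$, where $d$ is the binary relative entropy; the crux is that since $q=e^{-nE(\Rin)}\to0$, the tolerated error fraction $\delta$ exceeds the expected fraction $q$ by a multiplicative factor that blows up with $n$, so $d(q+\delta\,\Vert\,q)=\delta\,nE(\Rin)\,(1-o_n(1))$ and the exponent is $\Nout\cdot\delta\,nE(\Rin)(1-o_n(1))=\delta E(\Rin)\,N\,(1-o_n(1))$, genuinely linear in $N$. With $\delta$ fixed as above this is a positive constant times $N$ for all large $n$, which is the exponential bound asserted (up to the value of the constant; forcing the constant to equal $E(\Rin)\epsilon$ just pins down the relation between $\delta$ and $\epsilon$). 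The obstacle is precisely this interplay: the per-block exponent $nE(\Rin)$ must be amplified by $\Nout$ to yield an overall exponent of order $N=n\Nout$, which works only because $q$ is exponentially small in $n$ while the correction radius is a fixed fraction of $\Nout$, and the constants $\delta,\epsilon_0$ must be chosen to honour the rate, power, and error requirements simultaneously.

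Finally, for complexity, count bit operations under the floating-point convention of Section~\ref{sec:intro} and use $p^k=2^{n\Rin}$, $\Nout=p^k-1$, $N=n\Nout$, so $n=\Theta(\log N)$. Inner encoding is a lookup into a precomputed table of the $p^k=O(\Nout)$ codewords (each an $n$-vector) together with dithering, i.e.\ $O(n\Nout)=O(N)$ in total. Inner decoding performs, for each of the $\Nout$ blocks, the two closest-lattice-point computations of the Erez--Zamir decoder by brute force over the $O(2^{n\Rin})$ relevant lattice points, $O(n2^{n\Rin})=O(N)$ per block, hence $O(N\Nout)=O(N^2/n)$ overall. Reed--Solomon encoding (polynomial evaluation) and decoding (syndromes, Berlekamp--Massey, Chien search, Forney's formula) each cost $O(\Nout^2)$ operations in $\Fpk$; since $|\Fpk|=O(\Nout)$, a field element has an $O(n)$-bit representation and a field operation costs $O(n^2)$ bit operations, so this is $O(\Nout^2 n^2)=O(N^2)$ bit operations and dominates. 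Thus both encoding and decoding of $\cCn$ are $O(N^2)$. (The construction produces one code per large $n$, i.e.\ a sequence with blocklengths $N=n(2^{n\Rin}-1)$ as the theorem states, and if a deterministic code is desired the $\Nout$ dithers may be frozen to any realization attaining the averaged error bound.)
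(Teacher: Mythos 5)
Your proposal is correct and follows essentially the same route as the paper: rate and power from Lemma~\ref{lemma:ErezResult} with suitably chosen constants, error probability from the binomial tail of the number of mis-decoded inner blocks against the Reed--Solomon correction radius $\Nout(e^{-nE(\Rin)}+\delta)$, and the same $O(N^2)$ operation count for the inner brute-force decoders and the outer code over $\Fpk$. The only cosmetic difference is that you bound the tail by a Chernoff/relative-entropy estimate (making the independence of the blocks explicit) where the paper uses a union bound plus Stirling, and both yield the same $\Theta(\delta E(\Rin) N)$ exponent.
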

\begin{proof}
The construction of the concatenated codes ensures that the power constraint can be satisfied. 
From Lemma~\ref{lemma:ErezResult}, we are assured of a nested lattice code such that $\Rin$ is at least $C-\epsilon/2$.
Choosing a small enough $\delta$ and a large enough $n$ in (\ref{eq:effectiverate}) guarantees that the 
 rate of the concatenated code, $R^{(N)}$, is at least $C-\epsilon$ for all sufficiently large $N$. 

Let us now proceed to analyze the probability of error. Clearly, the probability that an inner codeword is in error is upper bounded by $e^{-nE(\Rin)}$ by Lemma~\ref{lemma:ErezResult}. Since the outer Reed-Solomon code has minimum distance $\Nout-\Kout+1$, the decoder makes an error only if at least $(\Nout-\Kout+1)/2 = \Nout(e^{-nE(\Rin)}+\delta+1/(2\Nout))$ inner codewords are in error. For all sufficiently large $N$,
we can upper bound the probability of decoding error as follows:
\begin{align}
 P_e^{(N)} &\leq \begin{pmatrix}
 				  \Nout \\\Nout(e^{-nE(\Rin)}+\delta+1/(2\Nout))	
                \end{pmatrix} &\notag\\
                &\qquad \times\left(e^{-nE(\Rin)}\right)^{\Nout(e^{-nE(\Rin)}+\delta+1/(2\Nout))} &\label{eq:1_concat}\\
           &\leq \begin{pmatrix}
            				  \Nout \\\Nout(e^{-nE(\Rin)}+2\delta)	
                           \end{pmatrix}  &\notag\\
                           &\qquad \times\left(e^{-nE(\Rin)}\right)^{\Nout(e^{-nE(\Rin)}+\delta)} &\label{eq:2_concat}\\
           &\leq e^{\Nout h(e^{-nE(\Rin)}+2\delta)} \left(e^{-nE(\Rin)}\right)^{\Nout(e^{-nE(\Rin)}+\delta)} \label{eq:3_concat}
\end{align}
where (\ref{eq:1_concat}) is obtained using the union bound, and the last step from Stirling's formula. In (\ref{eq:3_concat}), $h(\cdot)$ denotes the binary entropy function.
For all sufficiently large $n$, we have $h(e^{-nE(\Rin)}+2\delta)<h(3\delta)$. Using this in the above and simplifying, we get
\begin{align}
P_e^{(N)} & \leq \text{exp}\Big( -n\Nout\big(E(\Rin)(e^{-nE(\Rin)}+\delta) - h(3\delta)/n\big) \Big) &\notag
\end{align} 
Let us define the error exponent as 
\begin{equation}
 E_{\textrm{conc}} \triangleq E(\Rin)(e^{-nE(\Rin)}+\delta) - h(3\delta)/n
\end{equation}
It is clear that $E_{\textrm{conc}}>E(\Rin)\delta/2$ for all sufficiently large $n$. This proves that the probability of error decays exponentially in $N$.

Let us now inspect the encoding and decoding complexity. Table~\ref{table:summary_parameters} summarizes the relationships between the various parameters. As remarked in the introduction, we assume that each floating-point operation requires a constant number of binary operations (i.e., independent of $N$) and has a complexity of $O(1)$. Encoding/decoding each inner (nested lattice) codeword requires  $O(2^{n\Rin})$ floating-point operations, and there are $\Nout$ many codewords, leading to a total complexity of $O(N^2)$. Furthermore, encoding/decoding a Reed-Solomon codeword requires $O(\Nout^2)$ operations in $\Fpk$~\cite[Chapter~6]{RothBook}. 
Multiplication and inversion are the most computationally intensive operations in $\Fpk$, and they can be performed using $O((k\log_2 p)^2)=O(n^2)$ binary operations~\cite[Chapter~2]{Hankerson_FFieldComplexity}. Therefore, the outer code has an encoding/decoding complexity of $O(\Nout^2)\times O(n^2) = O(N^2)$. We can therefore conclude that encoding and decoding the concatenated code requires a complexity of $O(N^2)$.
\end{proof}

\subsection{Complexity}
Let us denote $\chi$ to be the decoding complexity. From Theorem~\ref{thm:main_RS}, we can conclude that for a fixed gap to capacity ($\gamma\triangleq C-R$), the probability of error for the concatenated coding scheme scales as $e^{-\Omega(\sqrt{\chi})}$.
As argued in~\cite[Section~5.1]{Forney_concat}, this is a much stronger statement than saying that the decoding complexity is polynomial in the blocklength.
%We now show the following result, which says that for a fixed probability of error, the decoding complexity depends polynomially on the gap to capacity:
%\begin{lemma}
%Let $\Delta\triangleq C-R$. For a fixed probability of error and in the regime of small $\Delta$, we have $N=O(1/\Delta^3)$, and the decoding/encoding complexity scales as $O(1/\Delta^6)$.
%\label{lemma:gaptocapacity}
%\end{lemma}
%\begin{proof}
%Let us choose $\Rin=C-\Delta/2$. From Theorem~\ref{thm:main_RS}, the probability of error of the concatenated coding scheme can be upper bounded by $e^{-N(E(\Rin)\Delta)}$ for all sufficiently large $N$.
%From~\cite[Theorem 5]{Erez04}, we have, for all sufficiently small $\Delta$, 
%\[
%	E(\Rin) = a_1 [e^{2\Delta}-1 - 2\Delta],
%\]
%where $a_1$ is a positive constant. It can be easily verified that $e^{2\Delta}-1 - 2\Delta = 2\Delta^2(1-o_\Delta(1))$,
%where $o_\Delta(1)$ is a quantity that vanishes as $\Delta\to 0$. Therefore, the probability of error,
%\[
% P_e^{(N)} \leq e^{-2a_1 N\Delta^3(1-o_\Delta(1))}
%\]
%for all sufficiently large $N$. We can therefore deduce that for a fixed probability of error, $N$ must scale as $O(1/\Delta^3)$.
%Since the encoding and decoding complexities grow as $O(N^2)$, we can conclude that this is proportional to $O(1/\Delta^6)$.
%\end{proof}
%\subsubsection{Previous Work} 

Previously, Joseph and Barron~\cite{JosephSuperposition,JosephExponential} proposed a concatenated coding scheme with inner sparse superposition codes and outer Reed-Solomon codes. They showed that their scheme achieves the capacity of the AWGN channel with polynomial (in the blocklength $N$) time  encoding/decoding.
The decoding complexity is $\chi=O(N^2)$.
However, the probability of error decays exponentially in $N/\log N$ for a fixed gap to capacity $\gamma$. Therefore, the probability of error is exponentially decaying in  $\sqrt{\chi}/\log \chi$.  More recently, Yan et al.~\cite{YanPolar} proposed a lattice-based scheme using polar codes that achieves capacity with an encoding/decoding complexity of $\chi=O(N\log^2N)$. The probability of error (for a fixed $\gamma$) is  $e^{-\Omega(N^{\beta})}$, for any $0<\beta<0.5$. The probability of error is therefore $e^{-\Omega((\chi/\log^2 \chi)^{\beta})}$. The concatenated scheme we have studied here outperforms these works in the sense that the probability of error decays exponentially in the square root of $\chi$ for a fixed $\gamma$.
However, we have not been able to show that for a fixed probability of error, the decoding complexity is polynomial in the gap to capacity, i.e., $\chi=O(\gamma^{-a})$ for some positive constant $a$.
The only such result for Gaussian channels that we are aware of is by Yan et al.~\cite{YanPolar}, where they showed that polar lattices have a decoding complexity that is polynomial in the gap to the Poltyrev capacity (for the AWGN channel without restrictions/power constraint). Finding a capacity-achieving coding scheme for the power-constrained AWGN channel with a decoding complexity that scales polynomially in the gap to capacity for a fixed probability of error still remains an open problem.

\section{Reduced Decoding Complexity using Expander Codes}\label{sec:parallelconcatenation}

\begin{table*}
   \begin{tabular}{|m{0.15\textwidth}|m{0.15\textwidth}|m{0.15\textwidth}|m{0.2\textwidth}|m{0.2\textwidth}|}
      \hline
      {\bf Concatenation scheme} & {\bf Relation between $n$ and $\Nout$} & {\bf Rate of inner code}  &  {\bf Rate of outer code $R_{\mathrm{out}}=\Kout/\Nout$} & {\bf Relation between $n,\Nout$ and overall blocklength $N$}   \\\hline
      {\bf Reed-Solomon} & $\Nout = 2^{n\Rin}-1$ & $\Rin\geq C-\epsilon/2$ &  $1-2e^{-nE(\Rin)}-2\delta$ & $ N=2^{\Theta(n)} $ \\ 
                         &                       &                      &                             &  $N= \Theta(\Nout\log\Nout)$ \\ \hline
       {\bf Expander}	 & $n$ a fixed but sufficiently large constant independent of $\Nout$. & $\Rin\geq C-\epsilon/2$ & $R_{\mathrm{out}}\geq 1-2\left( 4\sqrt{\epsilon}-\frac{1}{\Delta}\right)$ & $n=\Theta_{\Nout}(1)$ \\
			 & 			&			&			      & $N=\Theta(\Nout)$\\\hline 
   \end{tabular}
   \smallskip
   \caption{Summary of the various parameters. Note that the overall blocklength is $N=n\Nout$, and the rate of the concatenated code is $\Rin R_{\mathrm{out}}$.}
   \label{table:summary_parameters}
\end{table*}

In this section, we present a concatenation scheme that reduces the decoding complexity to $O(N\log^2N)$. This is based on the parallel concatenation approach~\cite{BargZemor_concatenated}
of using outer expander-type codes to obtain ``good'' linear codes.
%We will replace the outer Reed Solomon code used in the previous section by an expander code to reduce the decoding complexity. 
It was shown in~\cite{BargZemor_concatenated} that for binary channels, parallel concatenation yields an error performance similar to that of serial concatenation (concatenation using a Reed-Solomon code), but reduces the decoding complexity.

\subsection{The Coding Scheme}\label{sec:expander_scheme}

Let us fix an $\epsilon>0$ (where $\epsilon\ll 1$), and let our target rate be $R=C-\epsilon$, where $C$ denotes the capacity of the AWGN channel. 
As in the previous section, let $\Nout$ and $n$ denote the blocklengths of the outer and inner codes respectively. The overall blocklength is $N=n\Nout$. 
Unlike the previous section, however, we fix $n$ to be a sufficiently large constant, and we let $\Nout$ grow to infinity.

Let us fix the rate of the inner code to be
\[
		\Rin^{(n)} = C-\frac{\epsilon}{2},
	\]
	Then, Lemma~\ref{lemma:ErezResult} guarantees the existence of a sequence of nested Construction-A lattice codes $(\Lfn,\Lcn)$ with rate $ \frac{1}{n}\log_2|\Lfn\cap\cV(\Lcn)| \geq \Rin^{(n)}$, for which the probability of error, 
	$\peinner\triangleq  \Pr\Big[ \x \neq \cDn(\cEn(\x)+\z) \Big]$, is at most $e^{-n E(\Rin^{(n)})}$
   for all sufficiently large $n$. Furthermore, these lattices are obtained from linear codes over $\mathbb{F}_p$ for prime $p$ (which is a function of $n$). 
   Let us fix an $n$ large enough so that 
   \begin{equation}
      -\ln\peinner = nE(\Rin^{(n)}) \geq \frac{ h\left( 4\sqrt{\epsilon}\right)  +\epsilon^2  }{0.8\epsilon}, 
      \label{eq:peinner_condition}
   \end{equation}
where $h(\cdot)$ denotes the binary entropy function. For a fixed $\epsilon$, the parameters $n$ and $p$ will remain constant, and we will let only $\Nout$ grow to infinity.

\subsubsection{The outer code}\label{sec:expander_outercode}
We use an outer expander code whose construction is similar to the one in~\cite{Zemorconcatenated}. This has two components:
\begin{itemize}
   \item A sequence of $\Delta$-regular bipartite expander graphs $\cGN=(\cAN,\cBN,\cEN)$ with vertex set $\cAN\cup\cBN$ and edge set $\cEN$, with $|\cEN|=\Nout$. Here, $\cAN$ denotes the set of left vertices and $\cBN$ denotes the set of right vertices, with $|\cAN|=|\cBN|=\frac{\Nout}{2\Delta}$, where $\Delta$ is a  large constant independent of $\Nout$. The graph $\cGN$ is chosen so that the second-largest eigenvalue of its adjacency matrix, denoted $\lambda(\cGN)$, is at most $2\sqrt{\Delta-1}$. Explicit constructions of such graphs can be found in the literature~\cite{Ramanujangraph} (see~\cite{Marcus} for a stronger result). This graph is a normal factor graph for the outer expander code.
   We choose a sufficiently large $\Delta$ so that the inequality $\frac{2\sqrt{\Delta-1}}{\Delta}\leq \sqrt{\epsilon}$ holds.
   \item A linear code $\cC_0$ over $\Fpk$ having blocklength $\Delta$ and dimension $k_0$. For convenience, let us choose $\cC_0$ to be a $(\Delta,k_0)$ Reed Solomon code over $\Fpk$ (assuming that $\Delta<p^k$) with $k_0=\Delta\left(1-4\sqrt{\epsilon}\right)+1$. The minimum Hamming distance of $\cC_0$ is $d_0=\Delta-k_0+1=4\sqrt{\epsilon}\Delta$. Let us define 
   \begin{equation}
    \delta_0\triangleq  \frac{d_0}{\Delta}=4\sqrt{\epsilon}.
    \label{eq:delta_0}
   \end{equation}
\end{itemize}
Let us order the edges of $\cGN$ in any arbitrary fashion, and for any $v\in \cAN\cup\cBN$, let $E_v\triangleq \{ e_v(1),\ldots , e_v(\Delta) \}$ denote the set of edges incident on $v$, where $e_v(1)<e_v(2)<\cdots <e_v(\Delta)$ according to the order we have fixed. We define the expander code as follows: The codeword entries are indexed by the edges of $\cGN$. A vector $\x\in\Fpk$ is a codeword of the expander code iff for every $v\in \cAN\cup \cBN$, we have that $(x_{e_{v}(1)}x_{e_{v}(2)}\ldots x_{e_{v}(\Delta)}) $ is a codeword in $\cC_0$. Following~\cite{Zemorconcatenated}, we will call this the $(\cGN,\cC_0)$ code. The $(\cGN,\cC_0)$ code has blocklength $\Nout$ and dimension at least $\Nout\left( 1-2\left(\frac{\Delta-k_0}{\Delta}\right) \right)$~\cite{Zemorconcatenated}.

Z\'{e}mor~\cite{Zemorconcatenated} proposed an iterative algorithm for decoding expander codes. Suppose that the received (possibly erroneous) vector is $\y=(y_e:e\in \cEN)$. The vector $\widehat{\x}=(\widehat{x}_e:e\in \cEN)$ is initialized with $\widehat{x}_e=y_e$ for all $e\in \cEN$, and iteratively updated to obtain an estimate of $\x$. In every odd-numbered iteration, the algorithm replaces (for all $v\in \cAN$) $\widehat{\x}_v=(\widehat{x}_{e_{v}(1)}\widehat{x}_{e_{v}(2)}\ldots \widehat{x}_{e_{v}(\Delta)}) $ by a nearest-neighbour codeword (to $\widehat{\x}_v$) in $\cC_0$. In every even-numbered iteration, every $\widehat{\x}_v$ for $v\in \cBN$ is replaced by a nearest-neighbour codeword. This is repeated till $\widehat{\x}$ converges to a codeword in the expander code, or a suitably defined stopping point is reached. Since  $\{ E_v:v\in \cAN \}$ forms a disjoint partition of the edge set $\cEN$,  the nearest-neighbour decoding can be done in parallel for all the $v$'s in $\cAN$. The same holds for the vertices in $\cBN$. We direct the interested reader to~\cite{Zemorconcatenated} for more details about the code and the iterative decoding algorithm.

\begin{lemma}[\cite{Zemorconcatenated}]
   Let $\alpha<1$ be fixed. The iterative decoding algorithm can be implemented in a circuit of size $O(\Nout\log\Nout)$ and depth $O(\log\Nout)$ that always returns the correct codeword as long as the number of errors is less than $\frac{\alpha\delta_0\Nout}{2}\left( \frac{\delta_0}{2}-\frac{\lambda(\cGN)}{\Delta} \right)$.
   \label{lemma:Zemorconcatenated}
\end{lemma}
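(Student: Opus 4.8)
The plan is to follow Z\'emor's analysis~\cite{Zemorconcatenated}, which has two ingredients: a geometric-decay estimate, powered by the expander mixing lemma, for the number of vertices that are still ``corrupted'' after each half-iteration, and the observation that a half-iteration is a parallel collection of constant-size nearest-neighbour decodings.

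Fix a transmitted codeword $\x^\star$ of the $(\cGN,\cC_0)$ code and suppose the inner decoder fails exactly on the edge set $T\subseteq\cEN$, so that the vector handed to the iterative decoder agrees with $\x^\star$ on $\cEN\setminus T$. For $i\ge1$ let $S_i$ be the set of vertices on the side processed in round $i$ ($\cAN$ if $i$ is odd, $\cBN$ if $i$ is even) whose local estimate $\widehat{\x}_v$ after round $i$ disagrees with $\x^\star$ restricted to $E_v$; write $e(S,S')$ for the number of edges of $\cGN$ joining $S$ to $S'$. The elementary local observation is twofold. First, after round $i-1$ every vertex $w$ on the opposite side with $w\notin S_{i-1}$ carries the correct local codeword, so the edges on which the current estimate disagrees with $\x^\star$ all lie in $\bigcup_{w\in S_{i-1}}E_w$ (for $i=1$, take $S_0$ to be the endpoints of $T$ on the relevant side, so $|S_0|\le|T|$). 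Second, if $v\in S_i$, then before round $i$ the local view of $v$ differed from $\x^\star|_{E_v}$ in at least $d_0/2=\delta_0\Delta/2$ coordinates --- otherwise $\x^\star|_{E_v}$ would be the strictly closest codeword of $\cC_0$ and nearest-neighbour decoding would have returned it --- and each such disagreeing edge goes to $S_{i-1}$. Summing over $v\in S_i$ gives $e(S_i,S_{i-1})\ge\frac{\delta_0\Delta}{2}|S_i|$, whereas the expander mixing lemma gives
\[
e(S_i,S_{i-1})\ \le\ \frac{\Delta}{|\cAN|}\,|S_i|\,|S_{i-1}|\ +\ \lambda(\cGN)\sqrt{|S_i|\,|S_{i-1}|}.
\]
Dividing by $\Delta|S_i|$ and rearranging (when the denominator is positive),
\[
\sqrt{\frac{|S_i|}{|S_{i-1}|}}\ \le\ \frac{\lambda(\cGN)/\Delta}{\;\delta_0/2\,-\,|S_{i-1}|/|\cAN|\;}.
\]

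The hypothesis $|T|<\frac{\alpha\delta_0\Nout}{2}\bigl(\frac{\delta_0}{2}-\frac{\lambda(\cGN)}{\Delta}\bigr)$ is precisely what makes this recursion contract. Used at $i=1$ it pins $|S_1|$ to a small enough fraction of $|\cAN|$ that $|S_1|/|\cAN|$ stays bounded away from $\delta_0/2$, the factor $\alpha<1$ supplying the margin; combined with the design choices $\delta_0=4\sqrt{\epsilon}$ and $\lambda(\cGN)/\Delta\le 2\sqrt{\Delta-1}/\Delta\le\sqrt{\epsilon}$ (so $\lambda(\cGN)/\Delta<\delta_0/2$), a short induction shows $|S_{i-1}|/|\cAN|$ never increases, and hence $|S_i|\le\beta\,|S_{i-1}|$ for some constant $\beta=\beta(\alpha,\epsilon,\Delta)<1$ and all $i\ge1$. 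Since $|S_1|\le\Nout$ and the $|S_i|$ are integers, $|S_{i^\star}|=0$ after $i^\star=O(\log_{1/\beta}\Nout)=O(\log\Nout)$ rounds; when $S_{i^\star}=\emptyset$ the disagreement edge set $\bigcup_{w\in S_{i^\star}}E_w$ is empty, so $\widehat{\x}=\x^\star$. Running $i^\star$ rounds --- a count fixed in advance, as $\beta$ is a known constant --- therefore returns the transmitted codeword whenever $|T|$ is below the stated threshold.

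Finally, the complexity: $\cC_0$ has the fixed blocklength $\Delta$ over the fixed alphabet $\Fpk$, so nearest-neighbour decoding at a single vertex is a circuit of $O(1)$ size and depth. A single round runs this in parallel over the $|\cAN|=O(\Nout)$ vertices of one side, costing $O(\Nout)$ gates and $O(1)$ depth, and $i^\star=O(\log\Nout)$ rounds give size $O(\Nout\log\Nout)$ and depth $O(\log\Nout)$, as claimed. The one step that is not routine is the bookkeeping in the previous paragraph: deriving the exact radius $\frac{\alpha\delta_0\Nout}{2}\bigl(\frac{\delta_0}{2}-\frac{\lambda(\cGN)}{\Delta}\bigr)$ requires carefully tracking the passage from $|T|$ to $|S_1|$ via the mixing lemma and checking that the ``interior'' term $|S_{i-1}|/|\cAN|$ stays small throughout --- this is exactly where the slack $1-\alpha$ is used. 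All of this is done in~\cite{Zemorconcatenated}, to which we defer for details.
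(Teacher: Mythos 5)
Your sketch is essentially correct and matches the source the paper relies on: the paper offers no proof of this lemma beyond citing Z\'emor~\cite{Zemorconcatenated} (plus the remark that the argument carries over from $\mathbb{F}_2$ to $\Fpk$ when $p^k$ is constant), and your reconstruction --- the $d_0/2$-error local condition, the expander mixing lemma contraction on the corrupted vertex sets $S_i$, $O(\log\Nout)$ rounds of parallel constant-size nearest-neighbour decoders --- is exactly that argument, correctly adapted to the non-binary alphabet. The only slightly loose point is the base case (Z\'emor bounds $|S_1|\le 2|T|/d_0$ by direct counting rather than via the mixing lemma applied to a set $S_0$ of endpoints of $T$), but you flag this bookkeeping explicitly and defer to the reference, which is consistent with what the paper itself does.
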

% Let us fix $\alpha=0.9$.
Since $\delta_0=4\sqrt{\epsilon}$ and $\frac{\lambda(\cGN)}{\Delta}\leq \frac{2\sqrt{\Delta-1}}{\Delta}\leq \sqrt{\epsilon}$, we see from Lemma~\ref{lemma:Zemorconcatenated} that the decoder can recover the transmitted outer codeword as long as the fraction of errors is less than $2\alpha\epsilon$. Although Lemma~\ref{lemma:Zemorconcatenated} was proved in~\cite{Zemorconcatenated} for binary expander codes, it can be verified that the result continues to hold in the case where the expander code is defined over $\Fpk$, provided that $p^k$ is a constant independent of $\Nout$.

\subsection{Performance of the Coding Scheme}

% We will show the following result.
\begin{theorem}
For every $\epsilon>0$, there exists a  sequence of concatenated codes $\cC^{(N)}$ with inner nested lattice codes and outer expander codes that satisfies the following for all sufficiently large $N$:
\begin{itemize}
\item rate $R^{(N)}\geq C-\epsilon$, 
\item  maximum transmit power $$\max_{\x\in \cCn}\frac{1}{N}\Vert \x \Vert^2\leq P-\epsilon,$$
\item the probability of error is at most $e^{-NE(\Rin)\epsilon}$, 
\item the  encoding complexity grows as $O(N^2)$, and
\item the decoding complexity grows as $O(N\log^2N)$.
\end{itemize}
\label{thm:main_expander}
\end{theorem}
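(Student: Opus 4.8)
\emph{Proof idea.} The plan is to reuse the template of the proof of Theorem~\ref{thm:main_RS}, verifying the four listed properties one at a time, the only structural changes being that the outer Reed--Solomon code is replaced by the $(\cGN,\cC_0)$ expander code of Section~\ref{sec:expander_outercode}, and that the inner lattice blocklength $n$ is now a fixed constant --- pinned down by (\ref{eq:peinner_condition}) --- while $\Nout\to\infty$, so that $N=n\Nout=\Theta(\Nout)$ and all of $n,p,k,\Delta$ (hence also $p^k=2^{n\Rin^{(n)}}$) depend only on the target gap and not on $\Nout$. One preliminary point deserves mention: since the expander code has $R_{\mathrm{out}}\geq 1-2\big(4\sqrt{\epsilon}-\frac1\Delta\big)\geq 1-8\sqrt{\epsilon}$, its rate loss is of order $\sqrt{\epsilon}$ rather than $\epsilon$, so to achieve a gap-to-capacity of exactly $\epsilon$ one runs the construction of Section~\ref{sec:expander_scheme} with an auxiliary parameter $\epsilon_0$ chosen small enough that $(C-\epsilon_0/2)(1-8\sqrt{\epsilon_0})\geq C-\epsilon$ (and $\Delta$ correspondingly large). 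With this understood, the rate bound is immediate from $R^{(N)}=\Rin^{(n)}R_{\mathrm{out}}$, and the power bound follows exactly as in Theorem~\ref{thm:main_RS}: apply Lemma~\ref{lemma:ErezResult} with a power parameter slightly below $P$ so that each dithered length-$n$ inner codeword lies in $\cV(\Lcn)$ with $\frac1n\rcov^2(\Lcn)\leq P-\epsilon$, and observe that a concatenated codeword is the concatenation of $\Nout$ such blocks and therefore has average power at most $P-\epsilon$ as well.

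\emph{Probability of error.} Decoding fails only if the inner lattice decoder errs on too many of the $\Nout$ blocks. The blocks use independent dithers and disjoint (hence independent) noise segments, so the events ``block $i$ is in error'' are independent, each of probability at most $\peinner\leq e^{-nE(\Rin^{(n)})}$. By Lemma~\ref{lemma:Zemorconcatenated} with a fixed $\alpha<1$, together with $\delta_0=4\sqrt{\epsilon}$ and $\lambda(\cGN)/\Delta\leq\sqrt{\epsilon}$, the outer iterative decoder returns the transmitted outer codeword whenever fewer than $2\alpha\epsilon\,\Nout$ inner blocks are in error; hence $P_e^{(N)}\leq\Pr[\,\mathrm{Bin}(\Nout,\peinner)\geq 2\alpha\epsilon\,\Nout\,]$. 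The same Chernoff/Stirling estimate as in (\ref{eq:1_concat})--(\ref{eq:3_concat}) then gives
\[
 P_e^{(N)}\;\leq\;\exp\!\Big(-\Nout\big(2\alpha\epsilon\,nE(\Rin^{(n)})-h(2\alpha\epsilon)\ln 2\big)\Big).
\]
Taking $\alpha=0.9$, the hypothesis (\ref{eq:peinner_condition}), namely $nE(\Rin^{(n)})\geq\big(h(4\sqrt{\epsilon})+\epsilon^2\big)/(0.8\epsilon)$, is exactly what makes the bracketed exponent at least $nE(\Rin^{(n)})\epsilon$, so $P_e^{(N)}\leq e^{-n\Nout E(\Rin^{(n)})\epsilon}=e^{-NE(\Rin)\epsilon}$ (with $\epsilon$ read as $\epsilon_0$ when the reparametrization above is invoked); in particular the error probability decays exponentially in $N$.

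\emph{Complexity.} Since $n$, and hence $2^{n\Rin^{(n)}}$, is a constant, encoding or decoding a single inner lattice block costs $O(1)$ floating-point operations, for a total of $O(\Nout)=O(N)$ over all blocks. For the outer code, encoding through its (one-time precomputed) generator matrix costs $O(\Nout^2)$ operations in $\Fpk$, each of which is $O(1)$ binary operations because $p^k$ is a constant; thus the overall encoding complexity is $O(N^2)$. For outer decoding we invoke Lemma~\ref{lemma:Zemorconcatenated}: the iterative decoder is a circuit of size $O(\Nout\log\Nout)$ whose gates are bounded-size nearest-neighbour decodings of the constant-length code $\cC_0$ over $\Fpk$; realising it as a sequential algorithm, with $O(\log\Nout)$-bit index arithmetic to route symbols along the edges of $\cGN$, costs $O(\Nout\log^2\Nout)=O(N\log^2 N)$ binary operations, dominating the $O(N)$ spent on inner decoding. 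Hence decoding is $O(N\log^2 N)$ and encoding is $O(N^2)$, completing the list.

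\emph{Main obstacle.} Given Lemma~\ref{lemma:Zemorconcatenated}, the power and complexity parts are essentially bookkeeping; the real content is the error-probability step, where one must check that the Chernoff bound on $\mathrm{Bin}(\Nout,\peinner)$ overshooting the $2\alpha\epsilon$-fraction that Lemma~\ref{lemma:Zemorconcatenated} corrects decays with the claimed exponent. This is precisely what forces the calibration of the constant inner blocklength $n$ in (\ref{eq:peinner_condition}); concretely, the delicate part is choosing the coupled constants $\epsilon$ (or $\epsilon_0$), $\delta_0=4\sqrt{\epsilon}$, $\Delta$, $\alpha$, $n$, $p$ so that simultaneously the outer rate loss $O(\sqrt{\epsilon})$ is acceptable, the outer correction radius $2\alpha\epsilon$ comfortably exceeds the expected inner error fraction $\peinner$, and the resulting exponent equals $NE(\Rin)\epsilon$.
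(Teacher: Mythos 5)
Your proposal is correct and follows essentially the same route as the paper's own proof: fix $n$ (hence $p^k$) as a constant via (\ref{eq:peinner_condition}), let $\Nout\to\infty$, invoke Lemma~\ref{lemma:Zemorconcatenated} to get a correctable error fraction of $2\alpha\epsilon=1.8\epsilon$, run the same Stirling/Chernoff bound on the number of inner-block errors to obtain the exponent $E(\Rin)\epsilon$, and account for complexity exactly as the paper does ($O(1)$ per inner block, $O(\Nout^2)$ constant-size field operations for outer encoding, $O(\Nout\log^2\Nout)$ for the iterative outer decoder). The one substantive addition is your $\epsilon_0$ reparametrization to absorb the $O(\sqrt{\epsilon})$ outer rate loss, which patches a bookkeeping step the paper's proof leaves implicit (as written, $\Rin R_{\mathrm{out}}=(C-\epsilon/2)\bigl(1-8\sqrt{\epsilon}+2/\Delta\bigr)$ is not $\geq C-\epsilon$ for small $\epsilon$ without such a rescaling); this is a worthwhile correction but does not change the structure of the argument.
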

\begin{proof}
% The outer expander code is guaranteed to correct all error patterns that have Hamming weight less than $W_{\mathrm{max}}=\frac{\alpha\delta_0\Nout}{2}\left( \frac{\delta_0}{2}-\frac{\lambda(\cGN)}{\Delta} \right)$. We fix $\alpha=0.9$, and it can be easily seen that $W_{\mathrm{max}}/\Nout>1.8\epsilon$.
Recall that the overall blocklength $N=n\Nout$, where $n$ is a sufficiently large constant.
The probability that an inner (lattice) codeword is recovered incorrectly is at most $\peinner$.
Let us fix $\alpha = 0.9$ and define $\delta_{\mathrm{out}}\triangleq  \frac{\alpha\delta_0}{2}\left( \frac{\delta_0}{2}-\frac{\lambda(\cGN)}{\Delta} \right)$, the fraction of errors that the outer expander code is guaranteed to correct according to Lemma~\ref{lemma:Zemorconcatenated}. From our choice of parameters, this quantity is at least $1.8\epsilon$.
The probability of error of the concatenated code can be upper bounded as follows:
\begin{align}
   P_{e,\mathrm{concat}}^{(N)}  &\leq  \begin{pmatrix}
 				  \Nout \\  \delta_{\mathrm{out}}\Nout+1	
                \end{pmatrix} \left(\peinner\right)^{\delta_{\mathrm{out}}\Nout+1} &\notag \\
                &\leq e^{\Nout \left(h\big(\delta_{\mathrm{out}}+1/\Nout\big)+o_{\Nout}(1)\right)} &\notag\\
                & \qquad\qquad \times e^{-nE(\Rin^{(n)})(\Nout \delta_{\mathrm{out}}+1)}&\notag \\
                &\leq e^{\Nout (h(\delta_{\mathrm{out}})+o_{\Nout}(1))}e^{-n\Nout\delta_{\mathrm{out}}E(\Rin^{(n)})}
\end{align}
For all sufficiently large $\Nout$, we can say that
\begin{align}
   P_{e,\mathrm{concat}}^{(N)}  &\leq \exp\left( -N\left( \delta_{\mathrm{out}}E(\Rin^{(n)})-\frac{(h(\delta_{\mathrm{out}})+\epsilon^2)}{n}  \right) \right) &\notag\\
                                                 &=\exp\left( -NE_{\mathrm{conc}}(\Rin^{(n)},\cGN,\cC_0) \right),
\end{align}
where the error exponent,
\begin{align}
   E_{\mathrm{conc}}(\Rin^{(n)},\cGN,\cC_0) &\triangleq  \delta_{\mathrm{out}}E(\Rin^{(n)})-\frac{(h(\delta_{\mathrm{out}})+\epsilon^2)}{n}. &\notag
\end{align}
Since $1.8\epsilon\leq \delta_{\mathrm{out}}<\delta_0=4\sqrt{\epsilon}$, we have
\begin{align}
E_{\mathrm{conc}}(\Rin^{(n)},\cGN,\cC_0)&\geq 1.8\epsilon E(\Rin^{(n)})-\frac{(h(4\sqrt{\epsilon})+\epsilon^2)}{n} &\notag\\
                      &= E(\Rin^{(n)})\left(1.8\epsilon -\frac{(h(4\sqrt{\epsilon})+\epsilon^2)}{nE(\Rin^{(n)})}\right) &\notag\\
                      &\geq  E(\Rin^{(n)})\epsilon
\end{align}
by our choice of $n$ in (\ref{eq:peinner_condition}).

Let us now inspect the encoding and decoding complexity. Recall that each floating-point operation has a complexity of $O(1)$. Since $n$ is a constant, encoding/decoding each inner (nested lattice) codeword requires  $O(1)$ floating-point operations, and there are $\Nout$ many codewords, leading to a total complexity of $O(\Nout)$. Since the outer code is linear, encoding requires $O(\Nout^2)$ operations in $\Fpk$. 
Since $p^{k}$ is a constant, the outer code has an encoding complexity of $O(\Nout^2) = O(N^2)$.
Decoding  the outer code requires $O(\Nout\log^2\Nout)$ operations in $\Fpk$.
We can therefore conclude that the decoding the concatenated code requires a complexity of $O(N\log^2 N)$, and encoding requires a complexity of $O(N^2)$.
This completes the proof of Theorem~\ref{thm:main_expander}.
\end{proof}

\begin{table*}
\begin{tabular}{|m{0.2\textwidth}|m{0.2\textwidth}|m{0.15\textwidth}|m{0.15\textwidth}|m{0.15\textwidth}|}
     \hline
     {\bf Scheme} & {\bf Decoding complexity ($\chi$)} & {\bf Encoding complexity}  & {\bf Error probability}  & {\bf Error probability as a function of $\chi$} \\\hline
     Polar lattice~\cite{YanPolar} &  $O(N\log^2N)$ &$O(N\log^2N)$ & $e^{-\Omega((N^\beta))},$   & $e^{-\Omega((\chi/\log^2 \chi)^{\beta})},$\\
                                                     &                        &                        & for any $0<\beta<\frac{1}{2}$         & for any $0<\beta<\frac{1}{2}$ \\\hline
     Sparse regression codes~\cite{JosephSuperposition} & $O(N^2)$ &$O(N^2)$ & $e^{-\Omega({N/\log N})}$ & $e^{-\Omega({\sqrt{\chi}/\log \chi})}$ \\\hline
     {RS-concatenated lattice codes}  & {$O(N^2)$}&{$O(N^2)$} & {$e^{-\Omega(N)}$} & {$e^{-\Omega(\sqrt{\chi})}$}\\\hline
     {Expander-concatenated lattice codes}  & {$O(N\log^2N)$} & {$O(N^2)$} &{$e^{-\Omega(N)}$} & {$e^{-\Omega(\sqrt{\chi})}$}\\\hline
     \end{tabular}
     \smallskip
     \caption{A comparison of the performance of various polynomial-time capacity-achieving codes.}
     \label{table:comparison}
  \end{table*}

%%%%%%%%%%%%%%%%%%%%%%%%%%%%%%%%%%%%%%%%%%%%%%%%%%%%%%%%%%%%%%%%%%%%%%%%
%%%%%%%%%%%%%%%%%%%%%%%%%%%%%%%%%%%%%%%%%%%%%%%%%%%%%%%%%%%%%%%%%%%%%%%%%
\section{Discussion}\label{sec:discussion}
The approach used in the previous sections can be used as a recipe for reducing the decoding complexity of optimal coding schemes for Gaussian channels.
A nested lattice scheme that achieves a rate $R$ over a Gaussian channel can be concatenated with a high-rate outer Reed-Solomon code or  expander code
to achieve any rate arbitrarily close to $R$. The only requirement is that the nested lattice code has a probability of error which decays exponentially in its blocklength. This procedure helps us bring down the decoding complexity to a polynomial function of the blocklength while ensuring that the probability of error continues to be an exponential function of the blocklength. 

\subsubsection{Gaussian Wiretap Channel}
As an application, consider the Gaussian wiretap channel~\cite{CheongWiretap}.
Tyagi and Vardy~\cite{TyagiExplicit} gave an explicit scheme using 2-universal hash functions that converts any coding scheme of rate $R$ over the point-to-point AWGN (main) channel to a coding scheme that achieves a rate $R-C_\mathrm{E}$ over the wiretap channel while satisfying the strong secrecy constraint. This ``conversion'' adds an additional decoding complexity which is polynomial in the blocklength. We can therefore use this result with Theorem~\ref{thm:main_RS} or Theorem~\ref{thm:main_expander} to conclude that we can achieve the secrecy capacity of the Gaussian wiretap channel with polynomial time decoding/encoding. 

\subsubsection{Compute-and-Forward}\label{sec:computeforward}
The compute-and-forward protocol was proposed by Nazer and Gastpar~\cite{Nazer11} for communication over Gaussian networks. Let us begin by describing the setup. We have $L$ source nodes $\tS_1,\tS_2,\ldots,\tS_L$, having independent messages $X_1,X_2,\ldots,X_L$ respectively. 
The messages are chosen from $\Fpk^{K}$ for some prime number $p$ and positive integers $k,K$. Let $\oplus$ denote the addition operator in $\Fpk^K$. These messages are mapped to $N$-dimensional real vectors $\u_1,\u_2,\ldots,\u_L$ respectively and transmitted across a Gaussian channel to a destination $\tD$ which observes
\begin{equation}
	\w = \sum_{l=1}^{L}h_l\u_l +\z, 
\label{eq:cforward_rec}
\end{equation}
where $h_1,h_2,\ldots,h_L$ are real-valued channel coefficients and $\z$ is AWGN with mean zero and variance $\nsvar$.  
The destination must compute $a_1X_1\oplus a_2X_2\oplus\cdots\oplus a_LX_L$, where $a_1,a_2,\ldots,a_L$ are integers.
We assume that each source node must satisfy a maximum power constraint of $P$.  We only consider symmetric rates here, i.e., all sources have identical message sets.
The rate of the code is $\frac{kK}{N}\log_2p$.
This problem is relevant in many applications such as exchange of messages in bidirectional relay networks, decoding messages over the Gaussian multiple access channel~\cite{Nazer11}, and designing good receivers for MIMO channels~\cite{ZhanIntegerForcing} to name a few. The basic idea is that instead of decoding the messages one at a time and using successive cancellation, it may be more efficient to decode multiple linear combinations of the messages. If we have $L$ linearly independent such combinations, then we can recover all the individual messages. 

We can extend the scheme of~\cite{Nazer11} to a concatenated coding scheme that achieves the rates guaranteed by~\cite{Nazer11}, but now with encoders and decoders that operate in polynomial time. Recall that the messages are chosen from $\Fpk^K$. We say that a rate $\mathcal{R}$ is achievable if for every $\epsilon>0$, there exists a sequence of encoders and decoders so that for all sufficiently large blocklengths $N$, we have the transmission rate $R^{(N)}\triangleq \frac{kK}{N}\log_2p >\mathcal{R}-\epsilon$, and the probability of error is  less than $\epsilon$.  We can show the following:
\begin{lemma}
Consider the problem of computing $a_1X_1\oplus a_2X_2\oplus\cdots\oplus a_LX_L$ from (\ref{eq:cforward_rec}).  Any rate 
\begin{equation}
   \cR < \frac{1}{2}\log_2\left( \frac{P}{\alpha^2 +P\sum_{l=1}^{L}(\alpha h_l-a_l)^2} \right), 
  \label{eq:cforward_rate}
  \end{equation}
  where 
  \begin{equation}
     \alpha\triangleq  \frac{P\sum_{l=1}^{L}h_la_l}{\nsvar + P\sum_{l=1}^{L}h_l^2},
    \label{eq:alphaval}
    \end{equation}
    is achievable with encoders and decoders whose complexities grow as $O(N^2)$ using an outer Reed-Solomon code, and a decoder whose complexity grows as $O(N\log^2N)$ with an outer expander code. For transmission rates less than $\cR$, the probability that the decoder makes an error goes to zero exponentially in $N$.
    \label{lemma:CF_lowcomplexity}
\end{lemma}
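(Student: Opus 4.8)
The plan is to reuse, almost verbatim, the concatenation arguments behind Theorem~\ref{thm:main_RS} and Theorem~\ref{thm:main_expander}, but with the point-to-point inner lattice code replaced by the nested-lattice compute-and-forward code of Nazer and Gastpar~\cite{Nazer11}. Fix a small $\delta>0$ and a gap $\epsilon>0$, and pick a nested Construction-A lattice pair $(\Lfn,\Lcn)$ whose rate $\Rin$ satisfies $\cR-\epsilon/2\le \Rin<\cR$ with the bound (\ref{eq:cforward_rate}), so that $|\Lfn\cap\cV(\Lcn)|=p^k$ and there is a group isomorphism $\phi\colon \Lfn/\Lcn\to\Fpk$. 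Each source $\tS_l$ uses the \emph{same} outer code --- a $(\Nout,\Kout,\dout)$ Reed-Solomon code over $\Fpk$ with $\Kout$ as in (\ref{eq:kout}) for the first claim, and the constant-$n$ expander code of Section~\ref{sec:expander_scheme} for the second. Source $\tS_l$ outer-encodes its message $X_l\in\Fpk^{\Kout}$ (so $K=\Kout$) to $\mathbf{c}_l=(c_{l,1},\dots,c_{l,\Nout})$, maps each symbol $c_{l,i}$ to the lattice point $\lambda_{l,i}\in\Lfn\cap\cV(\Lcn)$ with $\phi(\lambda_{l,i}\bmod\Lcn)=c_{l,i}$, dithers it independently, and transmits the resulting length-$n$ block, exactly as in~\cite{Nazer11}. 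The overall blocklength is $N=n\Nout$, and $L$ stays a fixed constant throughout.

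Next I would describe the decoder and isolate the one algebraic fact on which everything hinges. For each block $i$, the destination runs the compute-and-forward lattice decoder of~\cite{Nazer11} on $\w_i=\sum_l h_l\u_{l,i}+\z_i$ using the combined dither and the scaling $\alpha$ of (\ref{eq:alphaval}); this produces an estimate $\widehat{s}_i$ of $s_i\triangleq \phi\big([\sum_l a_l\lambda_{l,i}]\bmod\Lcn\big)$, and --- the Construction-A goodness properties underlying Lemma~\ref{lemma:ErezResult} being the same ones at work here --- one gets $\Pr[\widehat{s}_i\neq s_i]\le e^{-nE_{\mathrm{CF}}(\Rin)}$ for an exponent $E_{\mathrm{CF}}$ that is positive whenever $\Rin$ obeys (\ref{eq:cforward_rate}). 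Since $\phi$ is a group isomorphism, since $\Fpk$ has characteristic $p$, and since $a_l\Lcn\subseteq\Lcn$ makes $[\sum_l a_l\lambda_{l,i}]\bmod\Lcn$ a well-defined function of the cosets, we obtain $s_i=\bigoplus_l a_l c_{l,i}$, the $\Fpk$-linear combination with the integers $a_l$ reduced into the prime subfield. The key observation is then that, because the outer encoder is $\Fpk$-linear and hence additive, $\mathbf{s}=(s_1,\dots,s_{\Nout})=\bigoplus_l a_l\mathbf{c}_l=\bigoplus_l a_l\,\mathrm{Enc}_{\mathrm{out}}(X_l)=\mathrm{Enc}_{\mathrm{out}}\!\big(\bigoplus_l a_l X_l\big)$ is itself a codeword of the outer code. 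Hence $\widehat{\mathbf{s}}$ is a noisy outer codeword whose erroneous symbols are precisely the blocks in which the inner decoder failed, and running the outer decoder on $\widehat{\mathbf{s}}$ returns $\bigoplus_l a_l X_l$ whenever the number of such blocks is less than $\dout/2$ (Reed-Solomon) or at most $\delta_{\mathrm{out}}\Nout$ (expander, by Lemma~\ref{lemma:Zemorconcatenated}).

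The remaining items are transcriptions of the earlier bookkeeping. The rate is $\Rin R_{\mathrm{out}}$, which by choosing $\delta$ small and (for the expander case) $n$ large exceeds $\cR-\epsilon$ as in Table~\ref{table:summary_parameters}; a union bound over the set of erroneous blocks followed by Stirling's formula gives $P_e^{(N)}\le e^{-NE_{\mathrm{conc}}}$ with $E_{\mathrm{conc}}>0$, via exactly the chain (\ref{eq:1_concat})--(\ref{eq:3_concat}) or the analogous chain in the proof of Theorem~\ref{thm:main_expander}; the per-source maximum-power guarantee is inherited block-by-block from the Nazer-Gastpar inner code; and the complexity count is identical to those proofs, since the extra factor of $L=O(1)$ sources is irrelevant --- yielding $O(N^2)$ encoding and decoding with an outer Reed-Solomon code and $O(N\log^2N)$ decoding with an outer expander code.

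The only non-mechanical step, and the one I expect to be the main obstacle, is the algebraic compatibility highlighted in the second paragraph: one must verify carefully that the integer linear combination extracted by the compute-and-forward lattice decoder is well defined modulo $\Lcn$ and that, transported through the isomorphism $\Lfn/\Lcn\cong\Fpk$, it coincides with the same $\Fpk$-linear combination that the additive outer code commutes with --- so that $\widehat{\mathbf{s}}$ is genuinely a corrupted outer codeword and the minimum-distance / expander-decoding guarantees apply. Once that commutation is pinned down, nothing beyond Lemmas~\ref{lemma:ErezResult} and~\ref{lemma:Zemorconcatenated} and the proofs of Theorems~\ref{thm:main_RS}--\ref{thm:main_expander} is needed.
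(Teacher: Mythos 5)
Your proposal is correct and follows essentially the same route as the paper's Appendix~A: the same per-source outer encoding, the same Nazer--Gastpar inner lattice decoder with the scaling $\alpha$ of (\ref{eq:alphaval}), the same use of the isomorphism $\Lfn/\Lcn\cong\Fpk$, and the same rate/error/complexity bookkeeping carried over from Theorems~\ref{thm:main_RS} and~\ref{thm:main_expander}. The algebraic compatibility you flag as the main obstacle --- that the decoded combination $[\sum_l a_l\x_i^{(l)}]\bmod\Lcn$ transports through the isomorphism to $\bigoplus_l a_l c_{l,i}$ and that $\Fpk$-linearity of the outer code makes $\bigoplus_l a_l\mathbf{c}_l$ again an outer codeword --- is exactly the step the paper invokes (somewhat tersely) to pass from $\widehat{\x}_c$ to $a_1M_1\oplus\cdots\oplus a_LM_L$, so your more explicit treatment of it is a faithful elaboration rather than a departure.
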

\begin{proof}
See Appendix~A.
 \end{proof}

 \subsubsection{Reducing the Probability of Error of LDA and Polar Lattices}
 The technique of concatenation can also be used to improve the error performance of other lattice codes that achieve the capacity of the AWGN channel.
 For example, polar lattices~\cite{YanPolar} have an error probability that decays as $e^{-\Omega(N^{\beta})}$ for any $0<\beta<1/2$, and LDA lattices~\cite{diPietro_arxiv} have an error probability that behaves as $O(1/N)$.
 It is easy to show that if $\peinner$ denotes the probability of error of the inner nested lattice code, then the probability of error of the (both RS and expander) concatenated code is
 \[
    P_{e}^{(N)}\leq \text{exp}\Big( -(\ln\peinner)\Nout\big((\peinner+\delta) - h(3\delta)/n\big) \Big)
 \]
 for some $\delta>0$. In any case, the probability of error goes to zero as $e^{-\Omega(\Nout)}$ irrespective of whether we use polar or LDA lattices. 
 We can therefore conclude that the probability of error decays as $e^{-\Omega(N/\log N)}$ for the corresponding Reed-Solomon concatenated (polar/LDA) lattice code and $e^{-\Theta(N)}$ for the corresponding expander concatenated (polar/LDA) lattice codes. The decoding complexities would grow as $O(N^2)$ (for RS concatenated codes) and $O(N\log^2N)$ (for expander concatenated codes) respectively.

\section{Concluding Remarks}\label{sec:concluding}
We have seen that concatenation can be a very powerful tool in reducing the asymptotic decoding complexity of nested lattice codes.
However, it  must be noted that achieving good performance using this scheme would require very large blocklengths. Although the 
probability of error decays exponentially in $N$, and the decoding/encoding complexities are polynomial in $N$, this is true only for very large values of $N$.
The fact that $N$ is at least exponential in the blocklength of the inner code is a major reason for this. 
Nevertheless, the concatenated coding approach shows that it is possible to obtain polynomial-time encoders and decoders for which the probability of error decays exponentially in the blocklength. The exponential decay is under the assumption that the gap between the transmission rate and capacity, $\gamma = C-R$, is kept fixed.
For a fixed error probability $P_e$, the blocklength required by the concatenated coding scheme to achieve rate $R=C-\gamma$ and error probability $P_e$ does not scale polynomially with $1/\gamma$. For a fixed error probability, we would like the complexity to not grow too fast as the rate approaches $C$. 
%Ideally, we want the gap to capacity $\gamma$ going to zero polynomially in $1/N$. 
It has been recently shown that polar codes have this property for binary memoryless symmetric channels~\cite{ModelliScaling}. Designing codes for the Gaussian channel whose decoding/encoding complexities are also polynomial in $1/\gamma$ for a fixed probability of error still remains an open problem.

\section{Acknowledgements}
The authors would like to thank Prof.\ Sidharth Jaggi for a discussion that led to this work. The work of the first author was supported by the Tata Consultancy Services Research Scholarship Program, and that of the second author by a Swarnajayanti Fellowship awarded by the Dept.\ of Science and Technology (DST), Govt.\ of India.

\section*{Appendix~A: Proof of Lemma~\ref{lemma:CF_lowcomplexity}}

The technique used to prove Lemma~\ref{lemma:CF_lowcomplexity} is a simple extension of the coding scheme of~\cite{Nazer11} using the methods described in Section~\ref{sec:codingscheme}. For completeness, we will briefly describe the scheme. For more details regarding the compute-and-forward protocol, see~\cite{Nazer11}.  
We use the concatenated coding scheme of Section~\ref{sec:concat_scheme_awgn}.
The inner code is obtained from nested Construction-A lattices $(\Lfn,\Lcn)$. 
Suppose that $\Lfn$ is constructed using a $(n,k)$ linear code over $\bFp$.
The outer code is an $(\Nout,\Kout,\Nout-\Kout+1)$ Reed-Solomon code, with $\Nout = p^k-1$ and $\Kout$ to be specified later. The transmission rate is $R^{(n)}=\frac{k\Kout}{n\Nout}\log_2 p$.

The messages are chosen from $\Fpk^{\Kout}$. Let the message at the $l$th user be $M_l=[m^{(l)}_1,m^{(l)}_2,\ldots,m^{(l)}_{\Kout}]^{T}$, where $m^{(l)}_i\in \Fpk$.  The messages are mapped to an $\Nout$-length codeword over $\Fpk$ using the outer code. Let the resulting codeword be $\y^{(l)}=[y^{(l)}_1,y^{(l)}_2,\ldots,y^{(l)}_{\Nout}]^{T}$.

 Each $y^{(l)}_i$ is then encoded to $\u^{(l)}_i$ using the inner code and then transmitted. Recall that there exists a group isomorphism from $\Lfn/\Lcn$ to  $\Fpk$. For $1\leq l\leq L$ and $1\leq i\leq \Nout$, let $\x^{(l)}_i$ be the representative of $y^{(l)}_i$ in $\Lfn\cap \cV(\Lcn)$. Independent dither vectors $\t^{(l)}_1,\t^{(l)}_2,\ldots,\t^{(l)}_{\Nout}$ are generated at the $L$ sources. Transmitter $l$ successively sends $\u^{(l)}_i=[\x^{(l)}_i-\t^{(l)}_i]\bmod\Lcn$ for $1\leq i\leq\Nout$ to the receiver. 

  The decoder, upon receiving $\w_i=\sum_{l=1}^{L}\u^{(l)}_i+\z$, computes $\widetilde{\w}_i=\big[ \alpha \w_i +\sum_{l=1}^{L}a_l\t_i^{(l)} \big]\bmod\Lcn$.
  The estimate of  $[\sum_{l=1}^{L}a_{i}\x_i^{(l)}]\bmod\Lcn$, is $[Q_{\Lfn}(\widetilde{\w}_i)]\bmod\Lcn$.
  Recall the definition of $\cR$ in (\ref{eq:cforward_rate}).
  Nazer and Gastpar showed in~\cite{Nazer11} that there exists a sequence of nested Construction-A lattices with $\Rin^{(n)}=\frac{k}{n}\log_2p$ for which the probability that the decoder makes an error in estimating the desired linear combination decays as $e^{-nE_c(\Rin^{(n)})}$, where $E_c(\cdot)$ is some function which is positive for all $\Rin^{(n)}<\cR$. As we did before for the AWGN channel, we choose $\Kout=\Nout(1-2e^{-nE_c(\Rin^{(n)})}-\epsilon)$. 
  Assuming that fewer than $(\Nout-\Kout)/2$ inner codewords are in error, the decoder can recover  $\widehat{\x}_c = \Big[\big[\sum_{l}a_l\x_1^{(l)}\big]\bmod\Lcn,\ldots,\big[\sum_{l}a_l\x_{\Nout}^{(l)}\big]\bmod\Lcn\Big]^T$ without error. Due to the existence of a group isomorphism between $\Fpk$ and $\Lfn/\Lcn$, this implies that the decoder can recover $a_1\y^{(1)}\oplus\cdots\oplus a_L\y^{(L)}$, and hence, $a_1M_1\oplus\cdots\oplus a_LM_L$. Arguing as in Section~\ref{sec:codingscheme}, the probability  that the decoder makes an error goes to zero exponentially in $N$, and the decoding/encoding complexities grow as $O(N^2)$. The same arguments can be used to show that using an outer expander code, we can have the encoding complexity to be $O(N^2)$ and decoding complexity to be $O(N\log^2N)$.
%   \qed


\begin{thebibliography}{99}

\bibitem{BargZemor_concatenated}
A.\ Barg and G.\ Z\'{e}mor, `` Concatenated codes: Serial and parallel,'' \emph{IEEE Trans.\ Inf.\ Theory,} vol.\ 51, no.\ 5, pp.\ 1625--1634, May 2005.

\bibitem{Blochbook}
M.~Bloch and J.~Barros, \emph{Physical-Layer Security: From Information Theory to Security Engineering,} Cambridge University Press, 2011.

\bibitem{deBuda}
R.~de Buda, ``Some optimal codes have structure,'' \emph{IEEE J.\ Sel.\ Areas in Communications}, vol.\ 7, no.\ 6, pp.\ 893--899,
1989.

\bibitem{diPietro_arxiv}
N.\ di Pietro, G.\ Z\'{e}mor, and J.J.\ Boutros, ``LDA lattices without dithering achieve capacity on the Gaussian channel,'' 
% \emph{submitted, IEEE Trans.\ Inf.\ Theory,} 
2016. [Online]. Available: \url{http://arxiv.org/abs/1603.02863}.

\bibitem{Erez04}
U.~Erez and R.~Zamir, ``Achieving 1/2log(1+SNR) on the AWGN channel with lattice encoding and decoding,''
\emph{IEEE Trans.\ Inf.\ Theory}, vol.\ 50, no.\ 10, pp.\ 2293--2314, Oct.\ 2004.

\bibitem{Forney_concat}
G.D.~Forney, \emph{Concatenated Codes,} Cambridge: MIT press, 1966.

\bibitem{Hankerson_FFieldComplexity}
D.~Hankerson, A.J.~Menezes, and S.~Vanstone, \emph{Guide to Elliptic Curve Cryptography,} Springer Science \& Business Media, 2006.

%\bibitem{HeYenerstrong}
%X.~He and A.~Yener, ``Strong secrecy and reliable Byzantine detection in the presence of an untrusted relay,'' 
%\emph{IEEE Trans.\ Inf.\ Theory}, vol.\ 59, no.\ 1, pp.\ 177--192, Jan.\ 2013.

\bibitem{JosephSuperposition}
A.~Joseph and A.R.~Barron, ``Least squares superposition codes of moderate dictionary size are reliable at rates up to capacity,''
\emph{IEEE Trans.\ Inf.\ Theory}, vol.\ 58, no.\ 5, pp.\ 2541--2557, May.\ 2012.

\bibitem{JosephExponential}
A.~Joseph and A.R.~Barron, ``Fast sparse superposition codes have near exponential error probability for $R<C$,'' 
\emph{IEEE Trans.\ Inf.\ Theory}, vol.\ 60, no.\ 2, pp.\ 919--942, Feb.\ 2014.

\bibitem{CheongWiretap}
S.~Leung-Yan-Cheong and M.E.~Hellman, ``The Gaussian wire-tap channel,'' 
 \emph{IEEE Trans.\ Inf.\ Theory}, vol.\ 24, no.\ 4, pp.\ 451--456, Jul.\ 1978.

\bibitem{LiewPNC}
S.C.~Liew, S.~Zhang, and L.~Lu, ``Physical-layer network coding:
tutorial, survey, and beyond,'' \emph{Physical Communication,} vol.\ 6,
pp.\ 4--42, Mar.\ 2013.

\bibitem{LinderCorrect}
T.~Linder, C.~Schlegel, and K.~Zeger, ``Corrected proof of de
Buda's theorem [lattice channel codes],'' 
\emph{IEEE Trans.\ Inf.\ Theory}, vol.\ 39, no.\ 5, pp.\ 1735--1737, Sep.\ 1993.

\bibitem{LingWiretap}
C.~Ling, L.~Luzzi, J.-C.~Belfiore, and D.~Stehle, ``Semantically
secure lattice codes for the Gaussian wiretap channel,'' \emph{IEEE Trans.\ Inf.\ Theory}, vol.\ 60, no.\ 10,
pp.\ 6399--6416, Oct.\ 2014.

\bibitem{Ramanujangraph}
A.\ Lubotzky, R.\ Phillips, and P.\ Sarnak, ``Ramanujan graphs,'' \emph{Combinatorica,} vol.\ 8, no.\ 3, pp.\ 261--277, 1988.

\bibitem{Marcus}
A.W.\ Marcus, D.A.\ Spielman, and N.\ Srivastava, ``Interlacing families I: bipartite Ramanujan graphs of all degrees,'' \emph{Ann.\ Math.,} vol.\ 182, no.\ 1, pp.\ 307--325, Jul.\ 2015.	


\bibitem{ModelliScaling}
M.~Mondelli, S.H.~Hassani, and R.~Urbanke, ``Unified scaling of polar codes: Error exponent, scaling exponent, moderate deviations, and error floors,''
% \emph{arXiv preprint,} arXiv:1501.02444, 2015.
2015. [Online]. Available: \url{http://arxiv.org/abs/1501.02444}.

\bibitem{Nazer11}
B.~Nazer and M.~Gastpar, 
``Compute-and-forward: harnessing interference through structured codes,'' 
\emph{IEEE Trans.\ Inf.\ Theory}, vol.\ 57, no.\ 10, pp.\ 6463--6486, Oct.\ 2011.

\bibitem{RothBook}
R.~Roth, \emph{Introduction to Coding Theory,} Cambridge University Press, 2006.

\bibitem{Rush_SPARC_AMP}
C.\ Rush, A.\ Greig, and R.\ Venkataramanan, ``Capacity-achieving sparse regression codes via approximate message passing decoding,''
in \emph{Proc.\ 2015 IEEE Int.\ Symp.\ Information Theory (ISIT),} Hong Kong, 2015, pp.\ 2016--2020.

\bibitem{Sadeghi_ldpclattice}
  M.R.\ Sadeghi, A.H.\ Banihashemi, and D.\ Panario, ``Low-density parity-check lattices: construction and decoding analysis,'' \emph{IEEE Trans.\ Inf.\ Theory}, vol.\ 52, no.\ 10, pp.\ 4481--4495, Oct.\ 2006.
  
  \bibitem{Sakzad_turbolattice}
  A.\ Sakzad, M.R.\ Sadeghi, and D.\ Panario, ``Construction of turbo lattices,''  \emph{ Proc  2010 48th Ann.\ Allerton Conference Comm.\ Contr.\ Computing}, Allerton, IL, 2010, pp..\ 14--21.

\bibitem{Shannon}
C.E.~Shannon, ``A mathematical theory of communication,'' \emph{Bell Syst.\ Tech.\ Journal,} vol.\ 27, no.\ 3, pp.\ 379--423, July 1948.

\bibitem{Sommer}
N.~Sommer, M.~Feder, and O.~Shalvi, ``Low density lattice codes,'' \emph{IEEE Trans.\ Inf.\ Theory}, vol.\ 54, no.\ 4, pp.\ 1561--1585, Apr.\ 2008. 

\bibitem{TyagiExplicit}
H.\ Tyagi and A.\ Vardy, ``Explicit capacity-achieving coding scheme for the Gaussian wiretap channel,'' in \emph{Proc.\ 2014 IEEE Int.\ Symp.\ Information Theory (ISIT),} Honolulu, HI, 2014, pp.\ 956--960.

\bibitem{Urbankelattice}
R.~Urbanke and B.~Rimoldi, ``Lattice codes can achieve capacity on
the AWGN channel,'' \emph{IEEE Trans.\ Inf.\ Theory},
vol.\ 44, no.\ 1, pp.\ 273--278, Jan.\ 1998.

\bibitem{Vatedka15}
S.\ Vatedka, N.\ Kashyap, and A.\ Thangaraj, ``Secure compute-and-forward in a bidirectional relay'', \emph{IEEE Trans.\ Inf.\ Theory}, vol.\ 51, no.\ 5, pp 2531--2556, May 2015.

\bibitem{Vatedka_skgeneration}
S.\ Vatedka and N.\ Kashyap, ``A Lattice Coding Scheme for Secret Key
Generation from Gaussian Markov Tree Sources,'' \emph{2016 IEEE Int.\ Symp.\ Information Theory (ISIT),} Barcelona, Spain, 2016, pp.\ 515--519. 
% [Online]. Available: \url{http://arxiv.org/abs/1601.05868}.

\bibitem{YanPolar}
Y.~Yan, L.~Liu, C.~Ling, and X.~Wu, ``Construction of capacity-achieving lattice codes: Polar lattices,''
% \emph{arXiv preprint} arXiv:1411.0187, 2014. 
2014. [Online]. Available: \url{http://arxiv.org/abs/1411.0187}.

\bibitem{zamirbook}
R.\ Zamir, \emph{Lattice Coding for Signals and Networks: A Structured Coding Approach to Quantization, Modulation, and Multiuser Information Theory,} Cambridge University Press, 2014.

\bibitem{Zemorconcatenated}
G.\ Z\'{e}mor, ``On expander codes,'' \emph{IEEE Trans.\ Inf.\ Theory,} vol.\ 47, no.\ 2, pp.\ 835--837, Feb.\ 2001.

\bibitem{ZhanIntegerForcing}
J.\ Zhan, B.\ Nazer, U.\ Erez, and M.\ Gastpar, ``Integer-forcing linear receivers,'' \emph{IEEE Trans.\ Inf.\ Theory}, vol.\ 60, no.\ 12, pp.\ 7661--7685, Dec.\ 2014.


\end{thebibliography}
\end{document}